\newcommand{\hairspace}{\hspace{1pt}}
\newcommand{\eg}{\mbox{e.\hairspace{}g}.\ }
\newcommand{\ie}{\mbox{i.\hairspace{}e}.\ }
\newcommand{\etal}{\mbox{et~al.}\ }
\newtheorem{definition}{Definition}
\newtheorem{theorem}{Theorem}
\newtheorem{lemma}{Lemma}
\mathchardef\mhyphen="2D 
\newcommand\lstapp{\ensuremath{\mathbin{:\mkern-1mu:\mkern-1mu:}}} 
\newcommand\lstcons{\ensuremath{\mathbin{:\mkern-1mu:}}} 
\begin{document}

\title{Certifying Spoofing-Protection of Firewalls}

\author{\IEEEauthorblockN{Cornelius Diekmann, Lukas Schwaighofer, and Georg Carle}
\IEEEauthorblockA{Technische Universit{\"a}t M{\"u}nchen\\%
Email: $\lbrace \textnormal{diekmann} \vert \textnormal{schwaighofer} \vert \textnormal{carle} \rbrace \textnormal{@net.in.tum.de}$}
}

\maketitle

\thispagestyle{empty}

\begin{abstract}
We present an algorithm to certify IP spoofing protection of firewall rulesets.
The algorithm is machine-verifiably proven sound and its use is demonstrated in real-world scenarios.
\end{abstract}

\section{Introduction}
In firewalls, it is good practice and sometimes an essential security feature to prevent IP address spoofing attacks \cite{rfc2827,rfc3704,tolarisblogdisablerpfilter,bartal1999firmato,marmorstein2005itval,wool2004use}.
The Linux kernel offers reverse path filtering~\cite{kernelrpfilter}, which can conveniently prevent such attacks. 
However, in many scenarios (\eg asymmetric routing~\cite{iptablesperfectruleset}, failover~\cite{tolarisblogdisablerpfilter}, zone-spanning interfaces, or multihoming~\cite{rfc3704}), reverse path filtering must be disabled or is too coarse-grained (\eg for filtering special purpose addresses\cite{rfc6890}).
In these cases, spoofing protection has to be provided by the firewall which is configured by the administrator.
Unfortunately, writing firewall rules is prone to human error, a fact known for over a decade~\cite{firwallerr2004}.

For Linux netfilter/iptables firewalls~\cite{iptables}, we present an algorithm to certify spoofing protection of a ruleset.
It provides the following contributions; a unique set of features in their combination:
\begin{itemize}
	\item It is formally and machine-verifiably proven sound with Isabelle/HOL.
	\item It supports the largest subset of iptables features compared to any other firewall analysis systems.
	\item It was tested on the largest (publicly-available) firewall which was ever analyzed in academia.
	\item It terminates within a second for thousands of rules.
\end{itemize}

We chose iptables because it provides one of the most complex firewall semantics widely deployed~\cite{fireman2006,pozo2009model}.
Of course, our algorithm is also applicable to similar or less complex (\eg Cisco PIX) firewall systems.

\section{Related Work}
There are several popular static firewall analysis tools.
The Firewall Policy Advisor~\cite{alshaer2004discoverypolicanomalies} discovers inconsistencies between pairs of rules (\eg one rule completely overshadowing the other) in distributed firewall setups.
A similar tool is FIREMAN~\cite{fireman2006}. 
It can discover inconsistencies within rulesets or between firewalls and verify setups against administrator-defined policies.
To represent sets of packets, Binary Decision Diagrams (BDDs) are used.
It does not support matching on interfaces in a rule.
Since interfaces can be strings of arbitrary length, they may not be ideal for the encoding in BDDs and adding support to FIREMAN might be complicated or deteriorate its performance.

Margrave~\cite{nelson2010margrave} can be used to query (distributed) firewalls.
It is well suited to troubleshoot and to debug firewall configurations or to show the impact of ruleset edits.
For certain queries, it can also find scenarios, \eg it can show concrete packets which violate a security policy.
This scenario finding is sound but not complete.
Its query language as well as Margrave's internal implementation is based on first-order logic. 
A similar tool (relying on BDDs), with a different query language and focus on complete networks, is ConfigChecker~\cite{alshaer2011configcheckershort,alshaer2009configchecker}.
ITVal~\cite{marmorstein2005itval} can also be used to query firewalls.
It can also describe the firewall in terms of equivalent IP address spaces~\cite{marmorstein2006firewall}, which does not require the administrator to pose specific queries.

None of these tools can directly verify spoofing protection.
Nor are the tools themselves formally verified, which limits confidence in their results. 
In addition, the tools only support a limited subset of real-world firewalls~\cite{diekmann2015fm}.
If the firewall under analysis exceeds this feature set, the tools either produce erroneous results or cannot continue~\cite{diekmann2015fm}.

Jeffrey and Samak~\cite{fwmodelchecking2009jeffry} analyze the complexity of firewall analysis and conclude that most questions (for variable packet models) are NP-complete.
They show that SAT solvers usually outperform BDD-based implementations.

\section{Mathematical Background}
This work was done completely in the Isabelle theorem prover~\cite{isabelle2015,diekmanngithubiptablessemantics}.
Isabelle is an LCF-style theorem prover.
This means, a fact can only be proven if it is accepted by a mathematical inference kernel. 
This kernel is very small and thoroughly inspected by the formal methods community.
This makes errors very unlikely, which has been demonstrated by Isabelle's success over the past 20 years.
Our Isabelle formalization is publicly available~\cite{diekmanngithubiptablessemantics}.
An interested reader can replay the proofs and results of the evaluation on her system.
For brevity, we only outline a proof's core idea in this paper and refer the interested reader to our proof document for for further mathematical details.

Our notation is close to Isabelle, Standard ML, or Haskell:
Function application is written without parentheses, \eg $f\ a$ denotes function $f$ applied to parameter $a$. 
For lists, we denote $\mathit{cons}$ and $\mathit{append}$ by `$\lstcons$' and `$\lstapp$', \eg `$x \lstcons \left[y,z\right] \lstapp [a]$'.
Linux shell commands are set in \texttt{typewriter} font.

\paragraph*{Iptables Semantics}
To define and prove correctness of our algorithm, one must first define the semantics (\ie behavior) of an iptables firewall.
We rely on the semantics specified by Diekmann \etal\cite{diekmann2015fm}.
The semantics defines the following common actions (also called ``targets'' in iptables terminology):
$\mathtt{Accept}$, $\mathtt{Drop}$, $\mathtt{Reject}$, $\mathtt{Log}$, calling to and $\mathtt{Return}$ing from user-defined chains, as well as the ``empty'' action.
Matching a packet against match conditions (\eg IP source or destination addresses) is mathematically defined with a ``magic oracle'' which understands \emph{all} possible matches.

Obviously, semantics with a ``magic oracle'' cannot be expressed in terms of executable code.
Nevertheless, the algorithm we present in this paper is both executable and proven sound w.\,r.\,t.\ these semantics.

\section{Spoofing Protection -- Mathematically}
To define spoofing protection, two data sets are required:
The firewall ruleset $\mathit{rs}$ and the IP addresses assignment $\mathit{ipassmt}$.
$\mathit{ipassmt}$ is a mapping from interfaces to an IP address range.
Usually, it can be obtained by \texttt{ip route}.
We will write $\mathit{ipassmt}[i]$ to get the corresponding IP range of interface $i$.
For the following examples, we assume 
\begin{IEEEeqnarray*}{l}
\mathit{ipassmt} = [\mathtt{eth0} \mapsto \lbrace 192.168.0.0/24 \rbrace ]
\end{IEEEeqnarray*}

\begin{definition}[Spoofing Protection]
\label{def:nospoofstrict}
\textnormal{
A firewall ruleset provides \emph{spoofing protection} if for all interfaces $i$ specified in $\mathit{ipassmt}$, all packets from $i$ which are accepted by the ruleset have a source IP address contained in $\mathit{ipassmt}[i]$.
}\end{definition}

For example, using pseudo iptables syntax, the following ruleset implements spoofing protection:
\begin{Verbatim}[frame=single,label={firewall},labelposition=bottomline]
-i eth0 --src !192.168.0.0/24 Drop
any Accept
\end{Verbatim}


\paragraph*{Spoofing Protection with Unknowns}
iptables supports numerous match conditions and new ones may be added in future.
It is practically infeasible to support all match conditions in a tool~\cite{diekmann2015fm}.
As we don't want our algorithm to abort if an unknown match occurs, we will refine Def.~\ref{def:nospoofstrict} to incorporate unknown matches.
This is motivated by the following examples.

We assume that \verb~--foo~ is a match condition which is unknown to us.
Therefore, we cannot guarantee that
\begin{Verbatim}[frame=single,label={firewall},labelposition=bottomline]
-i eth0 --src !192.168.0.0/24 --foo Drop
any Accept
\end{Verbatim}
implements spoofing protections since \verb~--foo~ could prevent certain spoofed packets from being dropped.
Also, the following ruleset might neither implement spoofing protection since ~\verb~--foo~ might allow spoofed packets:
\begin{Verbatim}[frame=single,label={firewall},labelposition=bottomline]
--foo Allow
-i eth0 --src !192.168.0.0/24 Drop
any Accept
\end{Verbatim}
However, the following ruleset definitely implements spoofing protection; 
Independently of the meaning of \verb~--foo~ and \verb~--bar~, it is guaranteed that no spoofed packets are allowed:
\begin{Verbatim}[frame=single,label={firewall},labelposition=bottomline]
--foo Drop
-i eth0 --src !192.168.0.0/24 Drop
--bar Accept
\end{Verbatim}

\noindent
This motivates Def.~\ref{def:nospoofpotentially}.

\begin{definition}[Certifiable Spoofing Protection]
\label{def:nospoofpotentially}
\textnormal{
A firewall ruleset provides \emph{spoofing protection} if for all interfaces $i$ specified in $\mathit{ipassmt}$, all packets from $i$ which are \emph{potentially} accepted by the ruleset have a source IP address in the IP range of $i$.
}\end{definition}

This new definition is stricter than the original one: \mbox{Def.\,\ref{def:nospoofpotentially}} implies \mbox{Def.\,\ref{def:nospoofstrict}}.\footnote{Formally proven; It follows from~\cite[Thm\,3]{diekmann2015fm}}
Therefore the new definition is \emph{sound} and can be used to prove that the last example implements spoofing protection.

However, depending on the meaning of \verb~--foo~, some of the previous examples might also implement spoofing protection.
This cannot be shown with Def.\,\ref{def:nospoofpotentially}, so the new definition is not \emph{complete}.
However, as long as we anticipate unknown matches to occur, it is impossible to obtain completeness.

\section{Spoofing Protection -- Executable}
A straight forward spoofing protection proof of a firewall ruleset using Def.~\ref{def:nospoofpotentially} would require iterating over all packets, which is obviously infeasible.
We present an efficient executable algorithm to certify spoofing protection.

We assume the ruleset to be certified is preprocessed.
For this, we rely on the semantics-preserving ruleset simplification~\cite{diekmann2015fm}, which rewrites a ruleset to a semantically equivalent ruleset where only $\mathit{Accept}$ and $\mathit{Drop}$ actions occur.\footnote{The correctness of this preprocessing is also verified with Isabelle.}
A preprocessed ruleset always has an explicit deny-all or allow-all rule at the end; it can never be empty.
This rule corresponds to a chain's default policy.

We call our algorithm \texttt{sp} (``spoofing protection''). 
It certifies spoofing protection for one interface $i$ in $\mathit{ipassmt}$.
Using \texttt{sp} to certify all $i \in \mathit{ipassmt}$ for a ruleset implies spoofing protection according to Def.~\ref{def:nospoofpotentially}.

We assume the global, static, and fixed parameters of \texttt{sp} are an interface $i$ and the $\mathit{ipassmt}$. 
Then, \texttt{sp} has the following type signature: 
\begin{IEEEeqnarray*}{c}
\mathit{rule}\ \mathit{list} \ \times \ \mathit{ipaddr}\ \mathit{set} \ \times \ \mathit{ipaddr}\ \mathit{set} \to \mathbb{B}
\end{IEEEeqnarray*}
The first parameter ($\mathit{rule}\ \mathit{list}$) is the preprocessed firewall ruleset.
A rule is a tuple $(m, a)$, where $m$ is the match condition and $a$ the action.
The action is either $\mathit{Accept}$ or $\mathit{Drop}$.
For a packet $p$, there is a predicate $\mathit{matches} \ m \ p$ which tells whether the packet $p$ matches the match condition $m$. 

The second parameter ($\mathit{ipaddr}\ \mathit{set}$) is the set of potentially allowed source IP addresses for $i$. 
The third parameter ($\mathit{ipaddr}\ \mathit{set}$) is the set of definitely denied source IP addresses for $i$. 

The last parameter ($\mathbb{B}$) is a Boolean, which is true if spoofing protection could be certified.

Before we present the algorithm, we first present its correctness theorem (which will be proven later).

\begin{theorem}[\texttt{sp} sound]
\label{thm:spsound}
\addpenalty{-10000}
\textnormal{For any ruleset $\mathit{rs}$, if }
\begin{IEEEeqnarray*}{lcl}
\forall i \in \mathit{ipassmt}.\ \ \textnormal{\texttt{sp}}\ \mathit{rs}\ \lbrace\rbrace \ \lbrace\rbrace
\end{IEEEeqnarray*}
\textnormal{then $\mathit{rs}$ provides spoofing protection according to Def.~\ref{def:nospoofpotentially}.}
\end{theorem}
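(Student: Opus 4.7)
The algorithm \texttt{sp} recurses on the rule list, maintaining two accumulators: an over-approximation $A$ of source IP addresses which earlier rules could potentially allow in from $i$, and an under-approximation $D$ of source IP addresses that earlier rules definitely drop from $i$. The call shape $\texttt{sp}\ \mathit{rs}\ \lbrace\rbrace\ \lbrace\rbrace$ in the theorem therefore corresponds to the situation in which no rule has been processed yet. To prove soundness I would first strengthen the statement to an invariant that talks about arbitrary accumulator values, so that it becomes amenable to structural induction on the rule list.

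Concretely, I plan to establish a generalized lemma roughly of the form: if $\texttt{sp}\ \mathit{rs}\ A\ D$ returns true and $A$ and $D$ faithfully abstract the already-processed prefix, then for every packet $p$ arriving on interface $i$ that is potentially accepted by the whole ruleset and whose source is not already in $D$, the source address of $p$ lies in $A \cup \mathit{ipassmt}[i]$. Structural induction on $\mathit{rs}$ then proceeds by case split on the head rule. For the terminating default-policy rule the claim reduces to a direct check that $A \subseteq \mathit{ipassmt}[i]$ if the default is Accept, and is trivial if the default is Drop. For an inductive Accept or Drop head, I would appeal to how \texttt{sp} enlarges $A$ or $D$ respectively and invoke the inductive hypothesis on the tail.

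The main obstacle will be the interplay with unknown matches. Because \textit{matches} is specified via a magic oracle, the algorithm must use two syntactic approximations of each match condition: a \emph{potentially matches} over-approximation, which must contain every concrete match under every oracle resolution, and a \emph{definitely matches} under-approximation, which must be contained in every such set. Growing $A$ is sound only when using the over-approximation and growing $D$ is sound only when using the under-approximation; confusing the two would be catastrophic, so I would isolate this correspondence into an auxiliary lemma, provable by induction over the structure of match conditions, and reuse it in each inductive case.

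Once this piece is in place, the rest of the argument is routine set-theoretic manipulation. Specializing the strengthened invariant to the base case $A = D = \lbrace\rbrace$ asserts that every potentially accepted source address for interface $i$ lies in $\mathit{ipassmt}[i]$, which is exactly the per-interface formulation of Def.~\ref{def:nospoofpotentially}; combining this across all $i \in \mathit{ipassmt}$ under the hypothesis of the theorem yields the full spoofing-protection statement.
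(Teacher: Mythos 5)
Your overall strategy---generalize to arbitrary accumulators, induct on the rule list, and isolate the over-/under-approximation of $\mathit{matches}$ into an auxiliary lemma---is the same as the paper's, and your third paragraph on why growing $A$ needs the over-approximation while growing $D$ needs the under-approximation is exactly the right observation. The problem is that the generalized invariant you propose is not inductive, because you place the accumulator $A$ in the \emph{conclusion} (``the source address of $p$ lies in $A \cup \mathit{ipassmt}[i]$''). In the $\mathit{Accept}$ step, \texttt{sp} recurses with the enlarged set $A' = A \cup \lbrace \mathit{ip} \mid \exists p \ldots \textnormal{matches}\ m\ p \rbrace$, so the induction hypothesis only yields $\mathit{src}(p) \in A' \cup \mathit{ipassmt}[i]$, which is strictly weaker than the goal $\mathit{src}(p) \in A \cup \mathit{ipassmt}[i]$; there is no way to discharge the extra term. (Your statement happens to be \emph{true}, being a weakening of the correct lemma, but the induction you describe cannot prove it.) The paper's Lemma~\ref{lemma:nospoofgeneralized} avoids this by keeping the conclusion \emph{fixed} across the induction---it is Def.~\ref{def:nospoofpotentiallyoneiface} for the entire ruleset $\mathit{rs}_1 \lstapp \mathit{rs}_2$, with no occurrence of $A$ or $D$---and carries all approximation information purely as hypotheses, $A_\mathit{exact} \subseteq A$ and $D \subseteq D_\mathit{exact}$ relative to the processed prefix $\mathit{rs}_1$, which are preserved as the accumulators grow.

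A second concrete error is your base case. You special-case the final default-policy rule and claim the obligation ``is trivial if the default is Drop.'' It is not: an earlier $\mathit{Accept}$ rule may already have admitted spoofed sources (for instance a rule accepting sources in some $X \not\subseteq \bigcup\mathit{ipassmt}[i]$ followed by deny-all), and a Drop default does not undo that. The algorithm's actual termination test, reached on the empty list after the default rule has been consumed like any other rule, is $(A \setminus D) \subseteq \bigcup \mathit{ipassmt}[i]$; the soundness argument there needs $A_\mathit{exact} \setminus D_\mathit{exact} \subseteq A \setminus D$ (from the two approximation directions) together with the observation that no source address can be both accepted and denied, so that $A_\mathit{exact} \setminus D_\mathit{exact} = A_\mathit{exact}$ and transitivity gives $A_\mathit{exact} \subseteq \bigcup \mathit{ipassmt}[i]$. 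Repair the invariant and the base case along these lines and the remainder of your plan goes through.
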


The algorithm \texttt{sp}, presented in \mbox{Fig.\,\ref{algo:sp}}, is implemented recursively.
It iterates over the firewall ruleset.
\begin{figure*}
\begin{IEEEeqnarray*}{lcl}
	\textnormal{\texttt{sp}} \ [] \ A \ D & \ = \ & (A \setminus D) \subseteq \bigcup \mathit{ipassmt}[i] \\
    \textnormal{\texttt{sp}} \ ((m, \mathit{Accept}) \lstcons rs) \ A \ D & \ = \ & \textnormal{\texttt{sp}} \ rs \ (A \cup \lbrace \mathit{ip} \ \vert \ 
	   \exists \textnormal{$p$ from interface $i$ with src address $\mathit{ip}$}.\ \textnormal{matches} \ m \ p\rbrace ) \ D  \\
    \textnormal{\texttt{sp}} \ ((m, \mathit{Drop}) \lstcons rs) \ A \ D & \ = \ & \textnormal{\texttt{sp}} \ rs \ A \ \left(D \cup \left(\left\lbrace \mathit{ip} \ \vert \ 
       \forall \textnormal{$p$ from interface $i$ with src address $\mathit{ip}$}.\ \textnormal{matches} \ m \ p\right\rbrace \setminus A\right)\right) 
\end{IEEEeqnarray*}
  \caption{An algorithm to certify spoofing protection.}
  \label{algo:sp}
\end{figure*}

The base case is for an empty ruleset.
Here, $A$ and $D$ are the set of allowed/denied source IP addresses.
The firewall provides spoofing protection if the set of potentially allowed sources minus the set of definitely denied sources is a subset of the allowed IP range.

The two recursive calls collect these sets $A$ and $D$.
If the rule is an $\mathit{Accept}$ rule, the set $A$ is extended with the set of sources possibly accepted in this rule.
If the rule is a $\mathit{Drop}$ rule, the set $D$ is extended with the set of sources definitely denied in this rule, excluding any sources which were potentially accepted earlier.

As Theorem~\ref{thm:spsound} already states, \texttt{sp} can be started with any ruleset and the empty set for $A$ and $D$.

We will now describe how the $\mathit{ipaddr}\ \mathit{set}$ operations are implemented.
In general, we symbolically represent a set of IP addresses as a set of IP range intervals.
Since IP ranges are commonly expressed in CIDR notation (\eg $\mathit{a.b.c.d/n}$), the interval datatype proves to be very efficient.

Next, the set $\lbrace \mathit{ip} \ \vert \ \exists \textnormal{$p$ from interface $i$ with src address $\mathit{ip}$}.\allowbreak \textnormal{matches} \ m \ p\rbrace$ requires executable code.
Obviously, a straight-forward implementation which tests the existence of any packet is infeasible.
We provide an over-approximation for this set, the correctness proof confirms that this approach is sound.
First we check that $i$ matches all input interfaces specified in the match expression $m$. 
If this is not the case, the set is obviously empty.
Otherwise, we collect the intersection of all matches on source IPs in $m$.
If no source IPs are specified in $m$, then $m$ matches any source IP and we return the universe.

The set $\lbrace \mathit{ip} \ \vert \ \forall \textnormal{$p$ from interface $i$ with src address $\mathit{ip}$}.\allowbreak \textnormal{matches} \ m \ p\rbrace$ can be computed similarly.
However, we need to return an under-approximation here.
First, we check that $i$ matches, otherwise the set is empty.
Next, we remove all matches on input interfaces and source IPs from $m$.
If the remaining match expression is not unconditionally true, then we return the empty set.
Otherwise, we return the intersection of all source IP addresses specified in $m$, or the universe if $m$ does not restrict source IPs.

Note that after preprocessing we always have an explicit allow-all or deny-all rule at the end of the firewall ruleset.
Thus, $ A \cup D $ will always hold the universe after consuming the last rule.

\section{Evaluation -- Mathematically}
We outline the main idea of the correctness proof. 
Since \texttt{sp} operates on a fixed interface, we define certifiable spoofing protection for a fixed interface $i$.
Showing Def.~\ref{def:nospoofpotentiallyoneiface} for all interfaces is equivalent to Def.~\ref{def:nospoofpotentially}.

\begin{definition}
\label{def:nospoofpotentiallyoneiface}
\begin{IEEEeqnarray*}{c}
\forall p \in \left\lbrace \mathit{p} \ \vert \ \textnormal{$p$ from $i$ and potentially accepted by the firewall} \right\rbrace.\\
 \qquad $p$\textnormal{.src-ip} \in \bigcup \mathit{ipassmt}[i]
\end{IEEEeqnarray*}
\end{definition}

The correctness proof of \texttt{sp} is done by induction over the firewall ruleset.
Theorem~\ref{thm:spsound} does not lend itself to induction, since it features two empty sets which would generate unusable induction hypotheses.
To obtain a strong induction hypothesis, we generalize.
The ruleset is split into two parts: $\mathit{rs}_1$ and $\mathit{rs}_2$.
We assume that the algorithm correctly iterated over $\mathit{rs}_1$. 
For this lemma, we use the following notation:
\vskip-2.2em
\begin{IEEEeqnarray*}{l}
 A_\mathit{exact} = \lbrace \mathit{ip} \ \vert \ \exists p.\ \textnormal{$p$ from $i$ with src $\mathit{ip}$ and accepted by $\mathit{rs}_1$} \rbrace \\
 D_\mathit{exact} = \lbrace \mathit{ip} \ \vert \ \forall p.\ \textnormal{$p$ from $i$ with src $\mathit{ip}$ and denied by $\mathit{rs}_1$} \rbrace 
\end{IEEEeqnarray*}

\begin{lemma}
\label{lemma:nospoofgeneralized}
\textnormal{If} 
 $A_\mathit{exact} \subseteq A$ \textnormal{and} $D \subseteq D_\mathit{exact}$ \textnormal{and} $\textnormal{\texttt{sp}} \ \mathit{rs}_2 \ A \ D$ \textnormal{then Def.~\ref{def:nospoofpotentiallyoneiface} holds for} $\mathit{rs}_1 \lstapp \mathit{rs}_2$
\end{lemma}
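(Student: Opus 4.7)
The plan is to prove the lemma by induction on $rs_2$, sliding one rule at a time from the head of $rs_2$ to the tail of $rs_1$ while preserving the invariants $A_{\mathit{exact}} \subseteq A$ and $D \subseteq D_{\mathit{exact}}$, with $A_{\mathit{exact}}$ and $D_{\mathit{exact}}$ recomputed for the grown prefix. This generalization, rather than holding $A = D = \{\}$ fixed, is what makes the induction hypothesis usable, and is exactly why Theorem~\ref{thm:spsound} should be routed through this lemma.

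For the base case $rs_2 = []$, the call $\texttt{sp}\ []\ A\ D$ returns $(A\setminus D)\subseteq \bigcup \mathit{ipassmt}[i]$. Pick any packet $p$ from $i$ with source $\mathit{ip}$ that is potentially accepted by $rs_1$; then $\mathit{ip}\in A_{\mathit{exact}}\subseteq A$, and $\mathit{ip}\notin D_{\mathit{exact}}\supseteq D$ because not every packet from $i$ with that source is denied (the witness $p$ is accepted). Hence $\mathit{ip}\in A\setminus D\subseteq \bigcup \mathit{ipassmt}[i]$, as required by Def.~\ref{def:nospoofpotentiallyoneiface}.

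For the inductive step on $(m,\mathit{Accept})\lstcons rs_2'$, I would apply the induction hypothesis to $rs_2'$ with the grown prefix $rs_1 \lstapp [(m,\mathit{Accept})]$ and the updated $A' = A\cup \{\mathit{ip}\mid \exists p \text{ from } i \text{ with src } \mathit{ip}.\ \mathrm{matches}\ m\ p\}$, $D' = D$. The invariant $A_{\mathit{exact}}(rs_1 \lstapp [(m,\mathit{Accept})]) \subseteq A'$ follows by a case split on where acceptance happens: either inside $rs_1$ (covered by $A$) or via fall-through matching $m$ (covered by the new summand). The invariant $D\subseteq D_{\mathit{exact}}(rs_1 \lstapp [(m,\mathit{Accept})])$ is immediate because appending a rule cannot overturn a verdict already reached inside $rs_1$.

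The $\mathit{Drop}$ case is the main obstacle. With $D' = D \cup (\{\mathit{ip}\mid \forall p \text{ from } i \text{ with src } \mathit{ip}.\ \mathrm{matches}\ m\ p\}\setminus A)$, the delicate direction is $D'\subseteq D_{\mathit{exact}}(rs_1 \lstapp [(m,\mathit{Drop})])$. For $\mathit{ip}$ in the new summand but outside $A$, every packet $p$ from $i$ with source $\mathit{ip}$ matches $m$; such a $p$ is either already denied inside $rs_1$ (it cannot be accepted there, for otherwise $\mathit{ip}\in A_{\mathit{exact}}(rs_1)\subseteq A$, contradicting $\mathit{ip}\notin A$) or falls through $rs_1$ and is denied by the freshly appended rule. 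This is precisely why the algorithm subtracts $A$ before growing $D$. Carrying this case through cleanly against the semantics of~\cite{diekmann2015fm} --- in particular, formalising that decisions reached within $rs_1$ are stable under prefix extension in the presence of the magic oracle for unknown matches --- is where I expect most of the Isabelle bookkeeping to live.
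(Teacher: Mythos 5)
Your proof is correct and follows essentially the same route as the paper: induction over $\mathit{rs}_2$ generalized over $\mathit{rs}_1$, $A$, and $D$, with the base case resting on the fact that no source IP can be both (potentially) accepted and (definitely) denied, and the inductive steps preserving the over-/under-approximation invariants as one rule migrates from $\mathit{rs}_2$ to $\mathit{rs}_1$. You actually spell out the $\mathit{Accept}$ and $\mathit{Drop}$ invariant-preservation arguments (in particular why the algorithm subtracts $A$ before growing $D$) in more detail than the paper, which only asserts that the approximations were ``carefully constructed''.
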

\begin{proof}
The proof is done by induction over $\mathit{rs}_2$ for arbitrary $\mathit{rs}_1$, $A$, and $D$.

Base case (\ie $\mathit{rs}_2 = []$): From $\textnormal{\texttt{sp}} \ [] \ A \ D$ we conclude $(A \setminus D) \subseteq \bigcup \mathit{ipassmt}[i]$.
Since $A$ is an over-approximation and $D$ an under-approximation: $A_\mathit{exact} \setminus D_\mathit{exact} \subseteq A \setminus D$.
Since no IP address can be both accepted and denied we get $A_\mathit{exact} \setminus D_\mathit{exact} = A_\mathit{exact}$.
From transitivity we conclude $A_\mathit{exact} \subseteq \bigcup \mathit{ipassmt}[i]$, which implies spoofing protection for that interface according to Def.~\ref{def:nospoofpotentiallyoneiface}.

The two induction steps (one for $\mathit{Accept}$ and one for $\mathit{Drop}$ rules) follow from the induction hypothesis.
The over- and under- approximations were carefully constructed, such that the subset relations continue to hold.
The executable implementations of these sets also respect the subset relation; hence, the induction hypothesis solves these cases.
\end{proof}

\begin{proof}[Proof of Theorem~\ref{thm:spsound}]
Lemma~\ref{lemma:nospoofgeneralized} can be instantiated where $\mathit{rs}_1$ is the empty ruleset and $A$ and $D$ are the the empty set.
For this particular choice, it is easy to see that the preconditions hold.
Thus, for any $\mathit{rs}$, we conclude $\textnormal{\texttt{sp}} \ \mathit{rs} \ \lbrace\rbrace \ \lbrace\rbrace$ implies Def.~\ref{def:nospoofpotentiallyoneiface}.
Since this holds for arbitrary interfaces, we conclude Def.~\ref{def:nospoofpotentially}.
\end{proof}

Thus, our algorithm is proven \emph{sound} according to Def.~\ref{def:nospoofpotentially}.
This means, if the algorithm certifies a ruleset, then this ruleset is guaranteed to implement spoofing protection.

Note that our algorithm only certifies; debugging a ruleset in case of a certification failure remains manual. 
To debug, the proof of Lemma~\ref{lemma:nospoofgeneralized} suggest to consider the first rule where $(A \setminus D) \subseteq \bigcup \mathit{ipassmt}[i]$ is violated.

Standards such as Common Criteria~\cite{cc2012p3} require formal verification for their highest \emph{Evaluation Assurance Level} (EAL7), for example with Isabelle~\cite[\S{}A.5]{cc2012p3}. 
Therefore, any ruleset certified by Isabelle to provide spoofing protection could also be certified by Common Criteria EAL7.
Since we provide an executable algorithm, this can be done \emph{automatically} -- without the need for a user with formal background or manual proof.

The algorithm is \emph{not complete}.
This means, there may be rulesets which implement spoofing protection but cannot be certified by the algorithm.
This is bought by the approximations and by the support for unknown match conditions.
For example, the following ruleset cannot be certified:
\begin{Verbatim}[frame=single,label={firewall},labelposition=bottomline]
-i eth0 --src !192.168.0.0/24 --foo Drop
-i eth0 --src !192.168.0.0/24 --!foo Drop
any Accept
\end{Verbatim}

This is a reasonable decision for a completely unknown \verb~--foo~, since it might update an internal state and the mathematical equation ``$\verb~foo~ \vee \verb~!foo~ = \mathit{True}$'' may not hold.
However, if \verb~foo~ is replaced by the known and stateless match condition \verb~--protocol tcp~, the ruleset can be shown to correctly implement spoofing protection.
The algorithm, however, cannot certify it, since it does not track this match condition.
However, this is a made-up and bad-practice example, and we never encountered such special cases in any real-world ruleset.
The evaluation ---in which vast amounts of unknowns occurred--- shows that the algorithm certified all rulesets which included spoofing protection and correctly failed only for those rulesets which did not (correctly) implement it.
Thus, the incompleteness is primarily a theoretical limitation.

\section{Evaluation -- Empirically}
Often, firewalls start with an \verb~ESTABLISHED~ rule.
A packet can only match this rule if it belongs to a connection which has been accepted by the firewall previously.
Hence, the \verb~ESTABLISHED~ rule does not contribute to the access control policy for connection setup enforced by the firewall~\cite{fireman2006}.
Likewise, spoofed packets can only be allowed by the \verb~ESTABLISHED~ rule if they are allowed by any of the subsequent ACL rules. 
Therefore, as done in previous work~\cite[\S 6.4]{diekmann2015fm}, we either exclude this rule from our analysis or only consider packets of state \verb~NEW~.

We tested the algorithm on several real-word rulesets~\cite{diekmanngithubnetnetwork}. 
Most of them either did not provide spoofing protection or had an obvious spoofing protection and could thus be certified.
In this Section, we present the results of certifying the largest and most interesting ruleset.

First of all, for all rulesets, our algorithm was extremely fast:
Once the ruleset is preprocessed (few seconds) the certification algorithm only takes fractions of seconds for rulesets with several thousand rules.
We omit a detailed performance evaluation since these orders of magnitude are sufficient for a static/offline analysis system.\footnote{Certification runs of our algorithm were usually faster than reloading the ruleset on the firewall system itself.}

We present the certification of a firewall with about 4800 rules, connecting about 20 VLANs.
Every VLAN has its own interface.
Trying the certification, it immediately fails.
Responsible was a work-around rule which should only have existed temporarily but was forgotten.
This rule is now on the administrator's ``things to do the right way'' list and we exclude it for further evaluation.
Certifying spoofing protection for the first VLAN interface succeeds instantly.
However, trying to certify all other VLANs fails.
The reason is an error in the ruleset.
For every VLAN $n$, the firewall defines three custom chains: \texttt{mac\_}$n$, \texttt{ranges\_}$n$, and \texttt{filter\_}$n$.
The \texttt{mac\_}$n$ chain verifies that for hosts with registered MAC addresses and static IP addresses, nobody (with a different MAC address) steals the IP address.
This chain is primarily to avoid manual IP assignment errors.
Next, the \texttt{ranges\_}$n$ chain should prevent outgoing spoofing.
Finally, the \texttt{filter\_}$n$ chain allows packets for certain registered services.
The main error was that a spoofed packet from VLAN $m$ could be accepted by \texttt{filter\_}$n$ before it had to pass the \texttt{ranges\_}$m$ check.
The discovery of this error also discovered that the \texttt{mac\_}$m$ chains were not working reliably.
We verified these findings by sending and receiving such spoofed packet via the real firewall.
Finally, we fixed the firewall by moving all \texttt{mac\_}$n$ and \texttt{ranges\_}$n$ chains before any \texttt{filter\_}$n$ chains.
The certification for all but one\footnote{This one VLAN where the certification fails is only for internal testing purposes and deliberately features no spoofing protection} internal VLAN interfaces succeeds.
Next, the interfaces attached to the Internet are certified.
The IP address range was defined as the universe of all IPs, excluding the IPs owned by the institute.
Here, certification failed in a first run.
Responsible were some ssh rate limiting rules.
These rules were originally designed to prevent too many outgoing ssh connections.
However, since spoofing protection did not apply to them, an attacker could exploit them for a DOS attack against the internal network:
The attacker floods the firewall with ssh TCP SYN packets with spoofed internal addresses.
This exhausts the ssh limit of the internal hosts and it is no longer possible for them to establish new ssh connections.
This flaw was fixed and certification subsequently succeeds.
The improved and certified firewall ruleset is now in production use.

After this, another administrator got interested and wanted to implement spoofing protection for his firewall. 
To complicate matters, he was in Japan and the firewall in Germany. 
It was a key requirement that he would not lock himself out. 
Our tool could certify both: His proposed changes to the firewall correctly enforce spoofing protection and he will not lose ssh access. 
To provide a \emph{sound} guarantee for the latter, we applied the same idea as in our algorithm, but in reverse: the ruleset is abstracted to a stricter version (\ie a version that blocks more packets) and we consequently certify that it still allows \verb~NEW~ and \verb~ESTABLISHED~ ssh packets from the Internet.

\section{Conclusion}
We present an easy-to-use algorithm. 
It is fast enough to be run on every ruleset update.
It discovered real problems in a large, production-use firewall.
Both, the theoretical algorithm as well as the executable code are proven sound, hence if the algorithm certifies a firewall, the ruleset is \emph{proven} to implement spoofing protection correctly.
Sources and raw data: \cite{diekmanngithubiptablessemantics,diekmanngithubnetnetwork}.

\section*{Availability}
\label{sec:availability}
\noindent The analyzed firewall rulesets can be found at
\begin{center}
\vskip-0.8em
\url{https://github.com/diekmann/net-network}
\end{center}
\vskip-0.8em
\noindent
Our Isabelle formalization can be obtained from
\begin{center}
\vskip-0.8em
\url{https://github.com/diekmann/Iptables_Semantics}
\end{center}

\section*{Acknowledgments}
Andreas Korsten deserves our sincerest thanks for keeping our infrastructure running.
More than $20$ VLANs, transfer networks, an AS, several testbeds, Internet-wide scanning, hundreds of publicly accessible VMs randomly spawned on the fly, and probably thousands of special cases and exceptions.
This leads to an extremely complex infrastructure, all managed by one person; nevertheless he offers the time, commitment, and responsibility to deploy our newest research innovations on the live system.

Julius Michaelis implemented and proved large parts of the datatype to efficiently handle IP address ranges as intervals.

This work has been supported by the German Federal Ministry of Education, EUREKA project SASER, grant 16BP12304, and project SURF, grant 16KIS0145, and by the European Commission, project SafeCloud, grant 653884.


\bibliographystyle{IEEEtran}
\bibliography{IEEEabrv,../literature}

\end{document}